\newtheorem{obsrv}{Observation}
\begin{document}

\title{Optimizing {D}ijkstra for real-world performance}

\author{\IEEEauthorblockN{Nimrod Aviram\IEEEauthorrefmark{1},
Yuval Shavitt\IEEEauthorrefmark{2}}
\IEEEauthorblockA{Department of Electrical Engineering,
Tel Aviv University\\
Email: \IEEEauthorrefmark{1}nimrodav@mail.tau.ac.il,
\IEEEauthorrefmark{2}shavitt@eng.tau.ac.il
}}




\maketitle

\begin{abstract}

Using Dijkstra's algorithm to compute the shortest paths in a graph from a single source node to all other nodes is common practice in industry and academia. Although the original description of the algorithm advises using a Fibonacci Heap as its internal queue, it has been noted that in practice, a binary (or $d$-ary) heap implementation is significantly faster. This paper introduces an even faster queue design for the algorithm. 

Our experimental results currently put our prototype implementation at about twice as fast as the Boost implementation of the algorithm on both real-world and generated large graphs. Furthermore, this preliminary implementation was written in only a few weeks, by a single programmer. The fact that such an early prototype compares favorably against Boost, a well-known open source library developed by expert programmers, gives us reason to believe our design for the queue is indeed better suited to the problem at hand, and the favorable time measurements are not a product of any specific implementation technique we employed.

\end{abstract}

\section{Introduction}

Dijkstra's algorithm is a widely-used algorithm for solving the Single Source Shortest Path problem. Namely, given a vertex in a graph with non negative edge weights, compute the distances from this vertex to all other vertices. The algorithm employs a single data structure, a queue.  While any queue implementation will suffice for the correctness of the algorithm, obviously different queue implementations provide different running time complexity, both asymptotically and in practice. Algorithm textbooks mostly recommend using a Fibonacci Heap as the chosen queue implementation because of its fast asymptotic running time \cite{Cormen90}. Practitioners \cite{Boost}, as well as some textbooks \cite{Cormen90}, recommend using a $d$-ary heap as the queue implementation, because of its fast running time in practice.

The Dijkstra algorithm is obviously an important building block in network science where it is used for studying graph characteristics \cite{example-JJ,example-BIU}, large graph clustering \cite{BSWW14}, and many other graph related problems.  It is an important algorithm for the Internet, where it is
part of the shortest path calculation performed by OSPF \cite{ospf} and IS-IS \cite{is-is} protocols. It is also used by many applications in diverse fields, such as image processing \cite{avidan2007seam}, hardware design \cite{example-Cidon}, and many more.

Leveraging the algorithm's properties, we observe that the queue implementation does not have to deal with a general sequence of queries and updates. Indeed, since edge weights are non-negative, once a pop\_min() operation on the queue returned the value $x$, no value smaller than $x$ will ever be inserted (or be present) in the queue. We thus chose a queue implementation based on an array, where all vertices with current distance $d$ are stored in a linked list, whose base is the $d$-th cell of the array (for the clarity of the introduction, consider only the case of integer weights - floating point weights can also be dealt with, as will be explained later). Using this implementation gives O(1) insert() and decrease\_key() time, while the total time for all pop\_min() operations combined is also constant (and in practice, takes a few seconds).

We tested our implementation against Boost, which is one of the most highly regarded and expertly designed C++ library projects in the world \cite{Boost-brag}, and show that it outperforms Boost for both generated Erd\H{o}s-R\'{e}nyi networks, and the real-world mainland USA road network.

Similar performance analysis for several methods of implementing the algorithms, including methods similar to methods we suggest here, was performed by Cherkassky {\em et al.} \cite{cherkassky1996shortest}. However, the experiments in \cite{cherkassky1996shortest} were done some 20 years ago, on the limited hardware available at that time, and consequently, on graphs which could be processed on such hardware - such graphs are considered small nowadays. Just to give a sense of the differences in the order of magnitudes involved, \cite{cherkassky1996shortest} ran the experiments on a SUN Sparc-10 workstation with a 40 MHZ processor and 160 MBytes of memory, whereas most of our experiments were done on a 1600 MHZ machine with 15 Gigabytes of memory. The number of vertices in the graphs we use in our experiments is usually in the millions, whereas \cite{cherkassky1996shortest} have performed only 4 experiments where the number of vertices is above 1 million (and in these experiments, it is only slightly larger than 1 million).
Furthermore, \cite{cherkassky1996shortest} have mainly used a 3-ary heap and a double-bucket queue as queue implementations for the algorithm. A double-bucket queue implementation is similar to the mechanism we term Swap Prevention, described later in this report. In most experiments, they were unable to employ our suggested queue implementation, as it required too much memory. Indeed, with the limited amounts of memory available at the time, this would be expected. However, our results indicate that, assuming enough memory is available, our chosen queue implementation outperforms the Swap Prevention mechanism, and considering it is significantly simpler to implement, it would be the natural choice for the queue implementation. \cite{cherkassky1996shortest} could not run our chosen queue implementation on a single graph where the number of vertices is at least 1 million, and in several instances, they claim it was in fact unemployable even on small graphs (in nowadays standards) where the nodes number in the several thousands (we are surprised by this claim - while we expect memory constraints to be a problem for experiments done at the time, a reasonable implementation of our queue mechanism should be able to run on such graphs using less than the amount of memory they had available). In short, while \cite{cherkassky1996shortest} is obviously an important and beneficial work in this area, it does not analyze the performance of our queue implementation for graphs which would be considered a reasonable benchmark nowadays, on modern hardware.

The rest of the paper is organized as follows: Section \ref{sec:Qimpl} elaborates on the workings of the chosen queue implementation. Section \ref{sec:meas} details our measurements of running time for our implementation and the Boost implementation. Section \ref{sec:extn} discusses a few additional potential improvements for our implementation (including the aforementioned solution for floating point weights). Section \ref{sec:cncld} concludes the paper.

\section{Proposed Queue Implementation}  \label{sec:Qimpl}

We will briefly remind the reader the Dijkstra algorithm for calculating the shortest path between one source vertex and the rest of the vertices \cite{Cormen90}.
The algorithm maintains a queue of vertices, sorted by distance from the starting vertex. The queue is initialized to contain the starting vertex, with its distance of 0 (and formally, all other vertices with distance infinity). In each iteration, a pop\_min() operation is performed on the queue, popping out the vertex with the smallest distance present in the queue. All edges of this vertex are relaxed (in no particular order), while maintaining the distances of target vertices as represented in the queue: i.e., for each edge, if the new distance achieved by adding the edge's weight to the distance of the popped vertex is lower than the previous distance of the target vertex, a decrease\_key() operation is performed on the target vertex as present in the queue (formally distances are decreased from their initial value of infinity to a real value, in practice most implementations first insert vertices to the queue only when their distance is less than infinity). After all edges of the popped vertex are relaxed, the algorithm continues to the next iteration, where it performs another pop\_min() operation and so on.

The algorithm performs (up to) $V$ pop\_min() operations, and (up to) $E$ decrease\_key() operations. Choosing a Fibonacci Heap as the queue implementation gives constant amortized time for decrease\_key(), and O($\log V$) for pop\_min(), bringing the total complexity to O($E+V\log V$). However, as noted by \cite[Ch.\ 21]{Cormen90}, the constant factors hidden behind the Big-O notation for Fibonacci Heap make the running time very long in practice. Choosing a $d$-ary heap for the queue implementation gives O($\log V$) for decrease\_key() and O($\log V$) for pop\_min(), bringing the total complexity to O($E\log V + V\log V$). This is the popular choice for implementations of the algorithm, as exemplified by Boost \cite{Boost}.

Before we describe the queue implementation, we need the following observation:

\begin{obsrv}
\label{non-decreasing}
For Dijkstra's algorithm, once a vertex with distance $x$ returned from the pop\_min() operation, no vertex with distance less than $x$ will ever be present in the queue.
\end{obsrv}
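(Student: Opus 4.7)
The plan is to prove this by induction on the sequence of pop\_min() operations, showing the stronger statement that the sequence of distances returned by successive pop\_min() calls is non-decreasing, and that at any point in time the minimum distance present in the queue is at least the distance of the most recently popped vertex.

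First I would set up notation: let $x_1, x_2, \ldots$ denote the distances returned by successive pop\_min() calls, and consider the state of the queue between the $k$-th pop and the $(k+1)$-th pop. The inductive hypothesis is that (i) $x_1 \leq x_2 \leq \cdots \leq x_k$, and (ii) immediately after the $k$-th pop and the subsequent edge relaxations, every distance stored in the queue is at least $x_k$. The base case $k=0$ is trivial since the queue only contains the source vertex with distance 0, which is the smallest possible distance given non-negative weights.

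For the inductive step, I would argue as follows. Just before the $(k+1)$-th pop, the queue contains some collection of vertices whose distances, by the inductive hypothesis, are all at least $x_k$ (no decrease\_key() since the $k$-th pop can violate this, as I explain next). Hence $x_{k+1} \geq x_k$, giving the monotonicity. To re-establish clause (ii) for $k+1$, I consider the two ways a distance in the queue can change between the $(k+1)$-th pop and the $(k+2)$-th pop: either a fresh vertex is inserted via relaxing an edge $(u,v)$ from the popped vertex $u$, or an existing vertex's key is decreased via the same mechanism. In both cases the new distance equals $x_{k+1} + w$ for some edge weight $w \geq 0$, so it is at least $x_{k+1}$. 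Any vertex untouched by these relaxations already had distance $\geq x_{k+1}$ since it survived the pop\_min(). Thus every distance in the queue is $\geq x_{k+1}$, completing the induction.

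The statement of the observation then follows directly: fixing the pop that returned $x$, every subsequent state of the queue contains only distances $\geq x$, so no smaller distance can ever be present. I do not anticipate a serious obstacle here; the only subtlety is being precise about the temporal quantifier ``ever''—one must track the queue across all future iterations, not just the moment immediately following the pop—which is exactly why I strengthen the statement to the invariant (ii) rather than trying to prove the observation in a single step.
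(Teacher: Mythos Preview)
Your proposal is correct and follows essentially the same approach as the paper: both argue by induction over iterations, using the base case that pop\_min() returns the current minimum and the inductive step that new keys are of the form (popped distance) $+ w$ with $w \geq 0$. The only cosmetic difference is that you carry the running invariant ``all queue entries are $\geq x_k$'' for the most recent pop, whereas the paper fixes a single $x$ and inducts from the moment it is first popped; the underlying reasoning is identical.
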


\begin{proof}
Indeed, at the moment the first vertex with distance $x$ returned from pop\_min(), no vertex with distance less than $x$ was present in the queue (otherwise, that other vertex would be the result of pop\_min()).
We now use a proof by induction: suppose at the start of iteration $i$ of the algorithm, no vertex with distance less than $x$ was present in the queue, then at the end of the iteration no such vertex will be present in the queue either (we define "iteration" to mean the sequence of actions taken between two successive pop\_min() operations):
Indeed, name the vertex popped out of the queue in iteration $i$, $v_i$. By our assumption, $v_i$'s distance when popped out, $d$, satisfies $x \le d$. During the iteration, decrease\_key() operations are performed, but the new distances are of the form $d + w$, where $w$ is a weight of some edge, which according to the algorithm's assumptions is non-negative. Hence, the new distances are also at least $x$.
\end{proof}

Obviously, Dijkstra's algorithm performance depends on the queue implementation.  Thus, we now describe our new queue implementation. The queue consists of an array of size MAX\_INT (typically $2^{32}$), where cell $i$ in the array is the anchor of a linked list containing the vertices whose current distance is $i$. Most operations on the queue are somewhat trivial to implement:

\begin{enumerate}

\item init() simply zeroes all cells of the array.

\item insert(vertex $v$, distance $d$) simply inserts $v$ to the linked list in the $d$'th cell of the array.  

Since the list is not ordered $v$ can be placed at the head.

\item decrease\_key(vertex $v$, distance new\_distance) first removes $v$ from the linked list it is currently present in (we use a doubly-linked list for convenience sake), then performs insert($v$, new\_distance).

\end{enumerate}

The only non-trivial operation is pop\_min(), which we now describe:
The queue maintains as one of its internal members a lower bound on the minimal distance of the next pop\_min() operation, which we will call min\_distance\_candidate. This value is initialized to zero, then increased as increasing values are popped out of the queue - indeed, recall that the series of popped distances is (non-strictly) increasing, as proven by Observation~\ref{non-decreasing}. Another way of understanding the role of min\_distance\_candidate is to think of it as pointing to the cell out of which the last popped node was popped.
On pop\_min() start (see Figure~\ref{fig:popmin}), the code scans the cells of the array starting from min\_distance\_candidate, until it reaches a cell containing a non-empty list. The new value of min\_distance\_candidate is the index of this cell. The code then pops the first element from the list, and returns it. 

\begin{figure*}[htb]
\begin{center}
\begin{verbatim}
pop_min(Queue queue)
{
    while (queue.min_distance_candidate <= queue.max_distance_ever_seen) {
        cell = queue.array[min_distance_candidate]
        if not cell.list.empty() {
            result = cell.list.pop_start()
            return result
        }
        queue.min_distance_candidate++
    }
    return NULL
}
\end{verbatim}
\end{center}
\caption{A pseudocode for the pop\_min() operation} \label{fig:popmin}
\end{figure*}

This brings the total running time of the algorithm with our queue implementation to O($E + MAX\_INT$): Recall that Dijkstra's algorithm performs (up to) $V$ pop\_min() operations, and (up to) $E$ decrease\_key() operations. Our running time for decrease\_key() is O(1), contributing the first term of the total running time. As for the pop\_min() operations, we observe that once min\_distance\_candidate points to a non-empty cell in the array, pop\_min() is O(1). The total amount of operations, over the entire running time of the algorithm, advancing min\_distance\_candidate from its initial value of 0 to its final value of MAX\_INT is clearly MAX\_INT.

Furthermore, min\_distance\_candidate does not need to reach its maximum theoretical value of MAX\_INT - once the queue is empty, pop\_min() can return NULL and the algorithm is done. This can be implemented by maintaining the current number of vertices present in the queue, then returning NULL once that number reaches 0. Thus, the final value of min\_distance\_candidate is the distance of the farthest vertex from the starting vertex, which we will designate by $U$. This brings the total complexity to O($E + U$).
In our current implementation, the queue instead maintains the largest distance that was ever present in the queue, termed max\_distance\_ever\_seen, then terminates when min\_distance\_candidate surpasses max\_distance\_ever\_seen. Maintaining this value is important for avoiding unnecessary initializations of array cells. We emphasize that the actual memory interactions with the array only occur with cells up to max\_distance\_ever\_seen, such that memory regions above max\_distance\_ever\_seen aren't even physically allocated.
Adding another value of the number of vertices present to the implementation is quite easy, but currently unimplemented, as we suspect it won't further reduce our running time. 

Even in cases where the possible value of U is close to $2^{32}$ (this means using integers for storing distances is only marginally sufficient and thus may not be an appropriate choice), we emphasize at this early section of the paper that going over an array of size $2^{32}$ may sound prohibitively expensive, but in practice isn't: on our development machine (a strong machine for personal use), it takes about 50 Seconds to do so, while some memory needs to be swapped to the hard-drive. We elaborate on the matter later, but emphasize even now that this doesn't make our approach prohibitively expensive.

\section{Measurements} \label{sec:meas}

We now present measurements of the running time of our implementation, compared to the running time of the Boost implementation. We stress that achieving comparable performance to Boost is quite a feat in and of itself, since we're comparing code that was developed in a few weeks to a highly regarded, well polished library. Our code is publicly available at \cite{github}, and we encourage further experimentation with it. All time measurements were done on an Intel Core-i7 machine with 16 gigabytes of RAM (except for the protein network graph, see below). For Boost time measurements, we tested their implementation on each graph with 4 different heap implementations they recommend, then took the shortest time as the "Boost time". 

We benchmarked the implementations against a set of graphs generated from the Erd\H{o}s-R\'{e}nyi Model \cite{Erdos}. Figure \ref{fig:ER-time} shows that our implementation outperforms Boost on all tested graph sizes and densities. The runtime speedup ranges from 1.47 to 8.

\begin{figure}[ht!]
\begin{center}
\includegraphics[width=0.45\textwidth,natwidth=1201,natheight=900]{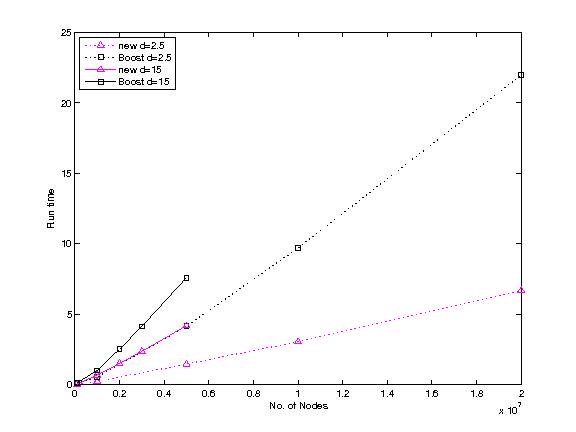}
\caption{Run time comparison between our implementation and Boost for generated Erd\H{o}s-R\'{e}nyi graphs. }
\label{fig:ER-time}
\end{center}
\end{figure}

Furthermore, we benchmarked both implementations on generated Barab\'{a}si-–Albert graphs \cite{barabasi1999emergence}, with the parameter $m$, the number of new edges per new vertex, ranging between 2 and 10, and with 10 million vertices. The weights are uniformly selected between 1 and 1000. Figure \ref{fig:BA-time} shows that in practice, there is little difference in the running time for different values of $m$; our implementation typically runs in about one milli-Second, and Boost typically runs in one and a half hundredth of a second, giving a speedup of about 15. Figure \ref{fig:BA-size} compares both implementations' running time for $m = 2$, while the number of vertices, $n$, grows. Our implementation's running time is always lower than Boost's, but it is not necessarily increasing with $n$. Such low running times, of typically less than 25 milli-Second, seem to indicate the graphs don't start to exercise the asymptotic behavior of our implementation. Boost's running time, on the other hand, is increasing with $n$, which seems to indicate that as $n$ will increase further, the running time will also increase further. For the graphs described above, each mark point represents the average of, at least, 20 random experiments.

\begin{figure}[ht!]
\begin{center}
\includegraphics[width=0.45\textwidth,natwidth=1201,natheight=900]{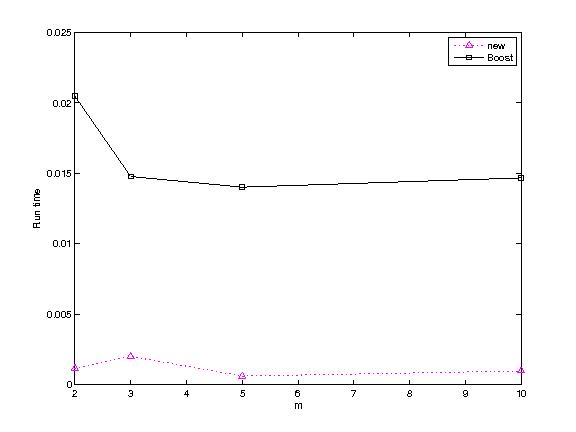}
\caption{Run time comparison between our implementation and Boost for generated Barab\'{a}si-–Albert graphs. }
\label{fig:BA-time}
\end{center}
\end{figure}

\begin{figure}[ht!]
\begin{center}
\includegraphics[width=0.45\textwidth,natwidth=1201,natheight=900]{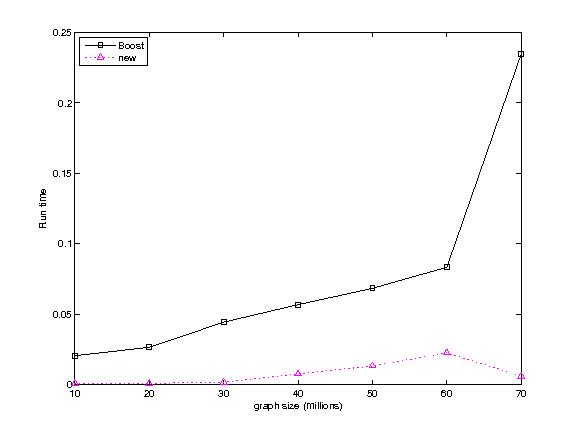}
\caption{Run time comparison between our implementation and Boost for generated Barab\'{a}si-–Albert graphs, for $m=2$. }
\label{fig:BA-size}
\end{center}
\end{figure}

We also benchmarked the implementations against the graph of the entire mainland USA road network, obtained from \cite{challenge9}. Our implementation typically runs in 2-3 Seconds, while the Boost implementation (again, taken as the shortest time between 4 possible heap implementations - in our experience, the variance of runtime between the different heap implementations is quite small) typically runs in 6-7 Seconds. Figure \ref{fig:usa} shows the performance over 1000 randomly selected starting vertices in that graph (note that the X axis designates a randomly chosen starting vertex for each point, not the 1000 vertices with the smallest indices) - clearly, our implementation runs faster than Boost on this graph regardless of the starting vertex.

Additionally, we attempted to benchmark our implementation against Boost on the protein network \cite{string-paper}, which can be found at \cite{string}, that translates to a graph with about five million nodes and 664 million edges. 
Such a large graph strains the memory requirements on our machine. Nevertheless, our implementation's runtime ranges from 0.0019 Seconds to 0.082 Seconds on a sample of 13 starting vertices.  
The Boost implementation initially hanged the machine due to large memory requirements for storing the graph and an initial data structure, which caused constant swapping and required quite an effort to help it executing.  We managed to get it to run for one case in 0.4 Sec for an instance that took our implementation 0.02 Seconds to run, a factor of 20 speedup for our implementation.  

\begin{figure}[ht!]
\begin{center}
\includegraphics[width=0.45\textwidth,natwidth=1201,natheight=900]{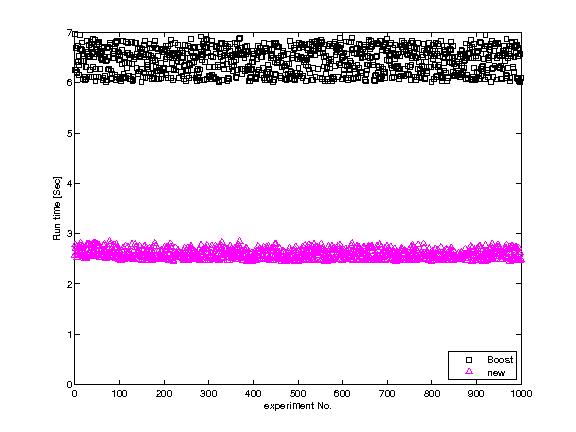}
\caption{Run time comparison between our implementation and Boost for the full USA road network. }
\label{fig:usa}
\end{center}
\end{figure}

\begin{table}
\begin{tabular}{| r | c | l | l |l|}
\hline
Vertices & Density & Our time & Boost time & speedup \\ 
    &   &   [Sec.] & [Sec.] & \\ \hline\hline
100,000 & 2.5 & 0.01 & ~0.08 & 8.00 \\ \hline
1,000,000 & 2.5 & 0.23 & ~0.53 & 2.30 \\ \hline
5,000,000 & 2.5 & 1.45 & ~4.14 & 2.86 \\ \hline
10,000,000 & 2.5 & 3.05 & ~9.7 & 3.18 \\ \hline
20,000,000 & 2.5 & 6.68 & 21.97 & 3.29 \\ \hline\hline
100,000 & 15 & 0.03 & ~0.11 & 3.67 \\ \hline
1,000,000 & 15 & 0.68 & ~1 & 1.47 \\ \hline
2,000,000 & 15 & 1.5 & ~2.51 & 1.67 \\ \hline
3,000,000 & 15 & 2.36 & ~4.12 & 1.75 \\ \hline
5,000,000 & 15 & 4.18 & ~7.59 & 1.82\\ \hline\hline
road USA & & &  & \\
23,947,347 & 2.44 & 2.57 & 6.25 & 2.43\\ \hline   
\end{tabular}
\caption{Run time and speedup comparison between our algorithm and Boost for generated E-R graphs, and USA road network. }
\label{tab:time}
\end{table}

\section{Improvements And Extensions}  \label{sec:extn}
We will now discuss two potential extensions to our implementation: dealing with floating point weights, and swap-prevention.  To the best of our knowledge, this is the first time the techniques presented in the previous sections are suggested for dealing with real numbers.

{\bf Dealing with floating point weights}: Naturally, as presented so far, our implementation lacks the capability to deal with floating point weights, as well as with weights larger than $2^{32}$. For the latter case, we suggest simply using a 32-bit floating point as the chosen weight representation, and accepting the resulting minimal loss of precision. We acknowledge that loss of precision might be unacceptable for some use cases of the algorithm, but postulate that this is quite rare. The same applies for 64-bit floating point representation: our method requires accepting the loss of precision resulting from switching to 32-bit floating point representation.

Focusing on 32-bit floating point weights, we observe that in essence, the queue depends on the weights being integers solely for the purpose of iterating over the weights in a monotonically increasing order. This can also be achieved for floating point values: a (positive) floating point value is in essence an ordered pair of a mantissa and an exponent, and comparing two values is simply comparing them lexicographically, exponent first and mantissa second. Thus, the floating point value corresponding to exponent $e$ and mantissa $m$, is larger than exactly $m + e \cdot M$ other floating point values, where $M$ is the number of possible mantissas. Therefore, the queue implementation can simply change to have cell $i$ contain a linked list of all vertices whose distance is the $i$-th smallest floating point value. It is easy to show that this preserves the correctness of the queue.

It should further be noted that most use cases of the algorithm can likely use a floating point representation that is 24 bits or less. For example, using 10 bits for the mantissa allows for a precision of 3 decimal digits past the decimal point, which probably suffices for the vast majority of cases, and using even just 6 bits for the exponent allows for orders of magnitude between $2^{-32}$ and $2^{32}$. We postulate that only rare use cases cannot accept a similar level of precision (this example fits in 16 bits. Using 24 bits and allocating the remaining 8 bits according to the use case's needs allows for an even greater level of precision). Obviously, assuming such level of precision is appropriate, our implementation's running time will be O($E + 2^{24}$), which in practice equals O($E$) since the O($E$) pointer manipulation operation is more meaningful in time consumption than reading the 64 mega-byte ($2^{24}$) long array.

{\bf Swap-Prevention}: as briefly mentioned earlier, going over the full possible range of distances, 0 to $2^{32}$, is in fact not prohibitively expensive even on a (not extraordinarily strong) 16 gigabyte RAM machine, where some memory required for the queue representation will have to be swapped to the hard drive. Just to give a sense of proportions, the total amount of memory required for our implementation in this case is 19.3 gigabytes (the major memory requirements are 16 gigabytes for the queue and 3.2 gigabytes statically allocated for the graph), and about 3 gigabytes are required for the operating system and other software running on the machine, which means roughly about 6 gigabytes will need to be swapped to the hard drive in the course of a single run of the program;  this requires about 50 Seconds. Swapping obviously can be avoided entirely by purchasing more RAM, which is quite inexpensive nowadays, and bringing the total to 32 gigabytes.

While accepting the cost of swapping or purchasing more RAM may be acceptable for some cases, the problem can also be avoided by employing a mechanism we term Swap-Prevention: We can limit the required memory for the queue to small values (almost arbitrarily small ones - even smaller than the CPU cache size). The way Swap-Prevention works is by dividing the array to equally sized pieces which we term "chunks". We wish to guide the reader by example, making the mechanism much clearer. Suppose each chunk is "16 bits", or $2^{16}$ long. Thus, the array is divided to $2^{16}$ chunks. The first chunk contains vertices with distances 0 to $2^{16} - 1$, the second one holds vertices with distances $2^{16}$ to $2^{17} - 1$, etc.
At any given moment, a single chunk is "active", meaning its $2^{16}$ cells occupy an array accordingly-sized in memory. The active chunk is the chunk containing the min\_distance\_candidate cell. Non-active chunks are "condensed", each of those chunks occupies a single linked list, containing all of the vertices present in the chunk. 

\begin{enumerate}

\item Inserting (and similarly, decrease\_key) is quite easy: if the new distance of a vertex falls inside the active chunk, the vertex is inserted to the linked list of vertices with that exact distance. Otherwise, the vertex is inserted to the single linked list containing all the vertices of its new, non-active chunk (this indeed causes the temporary "inconvenience" of having vertices with different distances in the same linked list - an inconvenience which will be dealt with later).

\item pop\_min() is somewhat more complex: if min\_distance\_candidate points to a non-empty cell, obviously a vertex can be popped out from the linked list and returned. Otherwise, min\_distance\_candidate is advanced in the hope of finding a non-empty cell inside the current active chunk. If min\_distance\_candidate reached the end of the chunk, the next (non-empty) chunk needs to be "expanded" into the array: The queue goes over the vertices of the "condensed" single linked list, and inserts each vertex to the appropriate cell inside the array. The chunk then becomes active, min\_distance\_candidate is reset to point to the first cell of the array, and the simpler pop\_min() case can now be executed.

\end{enumerate}

Clearly, the memory requirements when employing Swap-Prevention are an array whose size is the chunk size, and an additional array containing an anchor for each chunk. Thus, the memory requirements are $CHUNK\_SIZE + NUM\_OF\_CHUNKS$. Note that $CHUNK\_SIZE \cdot NUM\_OF\_CHUNKS = MAX\_INT$ must hold. Optimizing for memory consumption obviously gives the optimal chunk size as the square root of MAX\_INT, typically $2^{16}$. Note that there is no need to choose this particular value, any value for the chunk size will work (as long as there aren't divisibility considerations).

Swap-Prevention was implemented by us \cite{Swap-Prevention-Commit}, and somewhat surprisingly, found to actually impede performance by a factor of about 2. Our original aim was to fit the queue implementation inside the CPU cache, which is perfectly possible (our CPU, an Intel Core-i7, has 8 megabytes of cache, and choosing a $2^{16}$ chunk size keeps us under a single megabyte), but this neglects the fact that the graph representation is typically several gigabytes of memory, so cache misses will be abundant regardless. We welcome additional experimentation with the Swap-Prevention code. We also note that in cases where a 16-bit floating point (or even 16-bit integer) representation suffices, the queue can be made extremely small using Swap-Prevention, able to fit even in extremely tight memory requirements typically found in embedded systems. 

\section{Conclusion} \label{sec:cncld}
We presented a novel queue implementation well-suited for the Dijkstra's algorithm. Using this implementation, the algorithm's runtime is O($E + U$), where $U$ is the distance of the vertex farthest from the starting vertex. A prototype implementation compares favorably to Boost, a well-known, widely-used library. We released the code to make it available to the research community.


\section*{Acknowledgements}
This report was supported in part by a scholarship from the Ministry of Science and Technology of Israel.

\bibliographystyle{acm}
\bibliography{dijk}

\end{document}